\newtheorem{proposition}{Proposition}
\newenvironment{proof*}[1]
  {%
   \begin{proof}}
  {\end{proof}}
\newcommand{\reals}{\mathbb{R}}
\newcommand{\wVec}{\mathbf{w}}
\newcommand{\xVec}{\mathbf{x}}
\newcommand{\ones}{\mathbf{1}}
\title{Optimal Edge Weight Perturbations to Attack Shortest Paths}
\author{
Benjamin A. Miller$^1$\and
Zohair Shafi$^1$\and
Wheeler Ruml$^2$\and
Yevgeniy Vorobeychik$^3$\and\\
Tina Eliassi-Rad$^1$\And
Scott Alfeld$^4$\\
\affiliations
$^1$Northeastern University, Boston, MA, USA\\
$^2$University of New Hampshire, Durham, NH, USA\\
$^3$Washington University in St. Louis, St. Louis, MO, USA\\
$^4$Amherst College, Amherst, MA, USA\\
\emails
\{miller.be, shafi.z, t.eliassirad\}@northeastern.edu,
ruml@cs.unh.edu,
yvorobeychik@wustl.edu,
salfeld@amherst.edu
}
\begin{document}

\maketitle

\begin{abstract}

Finding shortest paths in a given network (e.g., a computer network or a road network) is a well-studied task with many applications. We consider this task under the presence of an adversary, who can manipulate the network by perturbing its edge weights to gain an advantage over others. Specifically, we introduce the \emph{Force Path Problem} as follows. Given a network, the adversary's goal is to make a specific path the shortest by adding weights to edges in the network.  The version of this problem in which the adversary can cut edges is NP-complete. However, we show that Force Path can be solved to within arbitrary numerical precision in polynomial time. We propose the \texttt{PATHPERTURB} algorithm, which uses constraint generation to build a set of constraints that require paths other than the adversary's target to be sufficiently long. Across a highly varied set of synthetic and real networks, we show that the optimal solution often reduces the required perturbation budget by about half when compared to a greedy baseline method.

\end{abstract}

\section{Introduction}
\label{sec:intro}
The shortest path problem is a seminal task in graph theory with numerous real-world applications in computer networks, transportation networks, etc. Given two nodes in a network, the shortest path between them is the set of edges that connects the two nodes with the minimum sum of edge weights. For a given network, two nodes can have more than one shortest path; and the network can be directed or undirected. Here we only consider undirected networks.

In this paper, we present the \emph{Force Path Problem}, where there is a specific path that the adversary wants to be the shortest path between a pair of source and destination nodes. The adversary can increase weights of edges and has a fixed budget with which to achieve this goal. The \emph{Force Path Problem} is similar to the \emph{Force Path Cut Problem}~\cite{Miller2021}. The difference is in the attack vector: in this case the adversary makes edges more expensive (by increasing edge weights) rather than removing edges. This difference may seem relatively minor, but it has a profound implication for the computational complexity of the problem. While Force Path Cut is NP-complete, we show in this paper that Force Path can be solved within arbitrary precision in polynomial time. We demonstrate that Force Path can be formulated as a linear program with a constraint set that is potentially factorial in the number of nodes. There is, however, a natural polynomial-time oracle to find violated constraints in any candidate solution (which we use to include a subset of constraints as necessary). We propose the \texttt{PATHPERTURB} algorithm that uses the oracle to iteratively refine the graph perturbations until the target path is the shortest.\footnote{We use the terms graph and network interchangeably.}

The main contributions of the paper are as follows:
\begin{itemize}
    \item We formally define the Force Path problem: an adversarial attack on shortest paths. 
    \item We formulate an oracle to identify the most violated constraint at any given point, the existence of which implies that Force Path can be optimized within arbitrary precision in polynomial time.
    \item We propose the \texttt{PATHPERTURB} algorithm, which uses the oracle to minimize the required perturbation budget. 
    \item We present the results of experiments on synthetic and real networks, in which \texttt{PATHPERTURB} reliably reduces the required perturbation budget compared to a greedy baseline method.
\end{itemize}

\section{Force Path Problem Definition}
\label{sec:model}

Consider a graph $G=(V, E)$, with a set of nodes $V$ and edges $E$, where $|V|=N$ and $|E|=M$. The edges in $E$ are undirected and have nonnegative weights $w:E\rightarrow\reals_{\geq0}$. The edge weights denote distances (a.k.a.~lengths) between the adjacent nodes. 
In addition to the weighted graph, we are given a pair of source and destination nodes $s,t\in V$. The adversary's goal is to make a specific path, $p^*$, the shortest path from $s$ to $t$ in $G$. The adversary can achieve this by arbitrarily increasing the weight of any edge in $G$, all of which are visible to him/her. Within a budget constraint $b$, the adversary increases edge weights to obtain new weights $w^\prime$ such that $\sum_{e\in E}{\left(w^\prime(e)-w(e)\right)}\leq b$ and $p^*$ is the (possibly exclusive) shortest path from $s$ to $t$.

In the next section, we will show that the Force Path Problem can be formulated as a linear program (LP), which implies a polynomial time algorithm to get a solution within any specified precision, despite a very large number of constraints.

\section{Force Path LP Formulation}
\label{sec:formulation}
Let $\wVec\in\reals_{\geq0}^M$ be a vector of edge weights in the original graph $G$ and $\Delta\in\reals_{\geq0}^M$ be a vector of edge-weight perturbations. 
 For any path $p$ from $s$ to $t$ in $G$, let $\xVec_p\in\{0, 1\}^M$ be an edge indicator vector for $p$: the entries for $\xVec_p$ associated with the edges that comprise $p$ are $1$, while all other entries are $0$. Thus, $\wVec^\top \xVec_p$ is the length of $p$ in the original graph and $(\wVec+\Delta)^\top \xVec_p$ is its length in the perturbed graph. 

The linear program formulation of Force Path is based on two key observations. First, any path $p$ that is not longer than $p^*$ must be perturbed to be longer than $p^*$. This is clear from the fact that, if it were not the case, $p^*$ would not be the shortest path. Second, we do not perturb $p^*$, formalized as follows:
\begin{proposition}
The minimum-budget solution to Force Path includes no perturbation of any edge along $p^*$.
\end{proposition}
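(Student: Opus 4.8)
The plan is to argue by contradiction using an exchange/projection argument: suppose there were a minimum-budget feasible perturbation $\Delta$ that assigns positive weight to some edge lying on $p^*$. I would then construct a new perturbation $\Delta'$ by zeroing out all the coordinates of $\Delta$ corresponding to edges of $p^*$ and leaving every other coordinate unchanged. The budget of $\Delta'$ is strictly smaller than that of $\Delta$ (since at least one positive coordinate was removed and the rest are nonnegative), so if I can show $\Delta'$ is still feasible — i.e., $p^*$ is still a shortest $s$--$t$ path under $\wVec + \Delta'$ — that contradicts minimality of $\Delta$.

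The key step is the feasibility check for $\Delta'$. The crucial observation is monotonicity: for every path $p$, the length $(\wVec+\Delta')^\top \xVec_p \le (\wVec+\Delta)^\top \xVec_p$, because $\Delta' \le \Delta$ coordinatewise. Moreover, for the target path $p^*$ itself, we have $(\wVec+\Delta')^\top \xVec_{p^*} = \wVec^\top \xVec_{p^*}$, since $\Delta'$ vanishes on exactly the edges of $p^*$; and since $\Delta \ge \zeros$, the original length $\wVec^\top \xVec_{p^*}$ is at most the perturbed length $(\wVec+\Delta)^\top \xVec_{p^*}$. Combining these, for any competing path $p$,
\[
(\wVec+\Delta')^\top \xVec_{p^*} = \wVec^\top \xVec_{p^*} \le (\wVec+\Delta)^\top \xVec_{p^*} \le (\wVec+\Delta)^\top \xVec_p,
\]
where the last inequality is feasibility of $\Delta$ (that $p^*$ is shortest under $\wVec+\Delta$). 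But this only bounds $(\wVec+\Delta')^\top \xVec_{p^*}$ by $(\wVec+\Delta)^\top \xVec_p$, not by $(\wVec+\Delta')^\top \xVec_p$, so I need the right direction: I actually want $(\wVec+\Delta')^\top \xVec_{p^*} \le (\wVec+\Delta')^\top \xVec_p$.

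To get that, I would split into two cases for a competing path $p$. If $p$ shares no edge with $p^*$, then $\Delta$ and $\Delta'$ agree on all edges of $p$, so $(\wVec+\Delta')^\top \xVec_p = (\wVec+\Delta)^\top \xVec_p \ge (\wVec+\Delta)^\top \xVec_{p^*} \ge \wVec^\top \xVec_{p^*} = (\wVec+\Delta')^\top \xVec_{p^*}$, as desired. If $p$ does share edges with $p^*$, then on those shared edges $\Delta'$ is zero while $\Delta$ may be positive, so the length of $p$ could drop when we pass from $\Delta$ to $\Delta'$; the clean way to handle this is the standard shortest-path subpath argument. Decompose $p$ at its intersection points with $p^*$: replace each maximal subpath of $p$ that starts and ends on $p^*$ but uses non-$p^*$ edges in between — actually, the cleanest framing is to compare $p$ against the hybrid path that follows $p^*$ wherever possible. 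I expect this case analysis — showing that any path touching $p^*$ can be "rerouted" along $p^*$ without increasing length under $\wVec+\Delta'$, and hence that $p^*$ remains shortest — to be the main obstacle, since it requires carefully handling paths that weave in and out of $p^*$ and invoking the fact that subpaths of shortest paths are shortest (applied to $p^*$ under the original weights, noting $\Delta'$ does not alter those weights on $p^*$). Once that rerouting lemma is in hand, feasibility of $\Delta'$ follows and the contradiction closes the proof.
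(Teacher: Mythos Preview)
Your overall contradiction structure is right, and zeroing out all of $p^*$'s coordinates at once (rather than a single edge, as the paper does) is a fine variant. But you are making the shared-edge case much harder than it needs to be, and the rerouting argument you sketch has a circularity problem: invoking ``subpaths of shortest paths are shortest'' for $p^*$ requires knowing that $p^*$ is a shortest path under $\wVec+\Delta'$, which is exactly what you are trying to establish. (It is certainly not a shortest path under the original weights $\wVec$ in general---that is the whole premise of the problem.)

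The fix is a one-line observation that handles both of your cases simultaneously and is precisely the mechanism the paper's proof uses. Passing from $\Delta$ to $\Delta'$ decreases the length of \emph{any} path $p$ by exactly $\xVec_p^\top(\Delta-\Delta')=\sum_{e\in p\cap p^*}\Delta(e)$, and this quantity is maximized when $p=p^*$, where it equals $\sum_{e\in p^*}\Delta(e)$. Hence, using feasibility of $\Delta$ for the middle inequality,
\[
(\wVec+\Delta')^\top\xVec_{p^*}
=(\wVec+\Delta)^\top\xVec_{p^*}-\sum_{e\in p^*}\Delta(e)
\le(\wVec+\Delta)^\top\xVec_p-\sum_{e\in p\cap p^*}\Delta(e)
=(\wVec+\Delta')^\top\xVec_p.
\]
No case split or path-rerouting lemma is needed; the point is simply that $p^*$ loses at least as much length as any competitor when you strip the perturbation off its own edges.
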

\begin{proof}
Let $\hat{\Delta}$ be the minimum-budget solution. Suppose that the claim is not true---i.e., there is an edge $e$, which is part of $p^*$, where $\hat{\Delta}(e) > 0$. Since $\hat{\Delta}$ is the solution to Force Path, $p^*$ is the shortest path when $\hat{\Delta}$ is added to the edge weights. Let $\Delta^\prime$ be the same vector with the entry at $e$ reduced to $0$, i.e.,
$\Delta^\prime(e^\prime)=\hat{\Delta}(e^\prime)$ for $e^\prime\in E\setminus \{e\}$ and $\Delta^\prime(e)=0$. Since $e$ is part of $p^*$, $\xVec_{p^*}^\top\hat{\Delta}-\xVec_{p^*}^\top\Delta^\prime=\hat{\Delta}(e)$. For any path $p$, since $\xVec_p$ consists of only zeros and ones, we have
\begin{equation}
    \xVec_{p}^\top\hat{\Delta}-\xVec_{p}^\top\Delta^\prime\leq\hat{\Delta}(e).\label{eq:pathReduction}
\end{equation}
Again, $p^*$ must be the shortest path using $\hat{\Delta}$, and, therefore, for any path $p$ from $s$ to $t$, $\xVec_{p}^\top\hat{\Delta}\geq\xVec_{p^*}^\top\Delta^\prime$. Combining this with (\ref{eq:pathReduction}), we have, for any path $p$
\begin{equation}
    \xVec_{p^*}^\top\Delta^\prime=\xVec_{p^*}^\top\hat{\Delta}-\hat{\Delta}(e)\leq\xVec_{p^*}^\top\hat{\Delta}-\hat{\Delta}(e)\leq\xVec_p^\top\Delta^\prime.
\end{equation}
 Thus, if $\hat{\Delta}$ were replaced with $\Delta^\prime$, $p^*$ would still be the shortest path. This implies that the total perturbation could be reduced by $\hat{\Delta}(e)>0$ and still achieve the objective, so $\hat{\Delta}$ is not the minimum-budget solution. Thus, we have a contradiction, and the proof is complete.
\end{proof}

The implication of this observation is that there is a fixed lower bound for the lengths of all paths. Let $\ell=\wVec^\top\xVec_{p^*}$ be the length of $p^*$ and $P_{\ell}$ be the set of paths from $s$ to $t$ whose length is less than or equal to $\ell$. Finally, let $\delta$ be the ``buffer'' we use to ensure $p^*$ is the unique shortest path: the difference between the length of $p^*$ and the length of the second shortest path\footnote{In a scenario where being tied for shortest is acceptable, $\delta$ can be set to $0$. If it is acceptable for $p^*$ to be shortest by any $\epsilon>0$, we can set $\delta$ to $0$ and distribute an arbitrarily small value across edges not on $p^*$. In this case, the budget must be strictly larger than the sum of the computed perturbations.}. We formulate the linear program as follows:
\begin{align}
  \hat{\Delta} = &\arg\min_{\Delta} \ones^\top\Delta\label{eq:minCost}\\
  \text{s.t.} &~\Delta_i\geq 0,~ ~1\leq i\leq M\label{eq:posPert}\\
  &~\left(\wVec+\Delta\right)^\top\xVec_p\geq \ell+\delta,~ ~\forall p\in P_{\ell+\delta}\setminus \{p^*\}\label{eq:tooShort}\\
  &~\xVec_{p^*}^\top\Delta=0\label{eq:dontChangePStar}.
\end{align}

As with the approximate version of Force Path Cut discussed in~\cite{Miller2021}, the number of paths in $P_{\ell+\delta}$ may be too large to enumerate all constraints. In an $N$-node clique, for example, the number of paths of length $N-1$ between any two nodes is $(N-2)!$. Thus, specifying all constraints in the linear program is computationally intractable. We use constraint generation to iteratively incorporate constraints as they are needed (see, e.g.,~\cite{Ben-Ameur2006,Letchford2013}). In order to use constraint generation, however, there must be an oracle that returns a constraint being violated at a given point.

There is, fortunately, a natural oracle for the constraints specified in~(\ref{eq:tooShort}), which not only returns a violated constraint, but the constraint \emph{most} violated at the proposed solution. We find this constraint as follows. The candidate solution is a perturbation to the edge weights, $\hat{\Delta}$. We apply the perturbation to get the new edge weights $w^\prime$, where
\begin{equation}
    w^\prime(e)=w(e)+\hat{\Delta}(e).\label{eq:updateWeight}
\end{equation}
Using $w^\prime$ as distances, we find the shortest path $p$ from $s$ to $t$ in $G$. If $p$ is $p^*$, we find the second shortest path if it exists. If there is no such path, there is no violated constraint. If there is, we let $p$ be this path. If $p$ is at least $\delta$ longer than $p^*$, there is no violated constraint. If not, the length of $p$ needs to be incorporated as a constraint.  Algorithm~\ref{alg:oracle} provides the pseudocode for this procedure.

\begin{algorithm}[tb]
\caption{ConstraintOracle}
\label{alg:oracle}
\textbf{Input}: graph $G$, weights $w$, target path $p^*$, buffer $\delta$\\
\textbf{Output}: A path $p$ from $s$ to $t$ in $G$
\begin{algorithmic}[1] 
\STATE $s\gets$ first node in $p^*$
\STATE $t\gets$ last node in $p^*$
\STATE $p\gets$ shortest path from $s$ to $t$ in $G$
\IF{$p$ is $p^*$}
\STATE $p\gets$ second shortest path from $s$ to $t$ in $G$
\STATE $\langle\langle p$ will be $\emptyset$ with length $\infty$ if $p^*$ is the only path$\rangle\rangle$
\ENDIF
\IF{length$(p)\geq$ length$(p^*)+\delta$}
\STATE $p\gets\emptyset$
\ENDIF
\RETURN $p$
\end{algorithmic}
\end{algorithm}

While the number of constraints may be extremely large, each one is a standard linear inequality constraint, which implies that the feasible region is convex. Since we have a constraint oracle that runs in polynomial time,\footnote{Finding the two shortest simple paths between two nodes takes $O(NM)$ time using Yen's algorithm~\cite{Yen1971}. If $\delta=0$ and edge weights are strictly positive, we can use an algorithm not restricted to simple paths that runs in $O(M+N\log{N})$ time~\cite{Eppstein1998}.} this system can be optimized in polynomial time regardless of the number of constraints. Using the ellipsoid algorithm introduced by Khachiyan (see~\cite{Gacs1981}), we can solve a linear program within finite precision in a polynomial number of iterations~\cite{Grotschel1981}. This results in the following proposition.

\begin{proposition}
Force Path can be optimized within precision of any constant $\epsilon>0$ in polynomial time.
\end{proposition}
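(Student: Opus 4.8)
The plan is to invoke the general theory of the ellipsoid method for convex optimization with a separation oracle, which reduces the problem to verifying three things: (i) the feasible region defined by constraints~(\ref{eq:posPert})--(\ref{eq:dontChangePStar}) is convex and, modulo the usual technical adjustments, bounded and full-dimensional (or otherwise amenable to the standard perturbation arguments); (ii) there is a polynomial-time separation oracle; and (iii) the encoding length of any single constraint is polynomially bounded. Step (i) is immediate: each constraint in~(\ref{eq:posPert})--(\ref{eq:tooShort}) is a linear inequality and~(\ref{eq:dontChangePStar}) is a linear equality (equivalently a pair of inequalities), so the feasible set is a polyhedron; boundedness can be enforced by intersecting with a large box $0 \le \Delta_i \le B$ for $B$ a suitable function of $\wVec$, $\ell$, and $\delta$, which does not change the optimum since an optimal perturbation never needs to exceed, say, $\ell + \delta$ on any edge.

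First I would set up the problem in the language of~\cite{Grotschel1981}: we have a linear objective $\ones^\top\Delta$ over a polyhedron $Q \subseteq \reals^M$ described implicitly by the (exponentially many) constraints in~(\ref{eq:posPert})--(\ref{eq:dontChangePStar}). Second, I would argue that Algorithm~\ref{alg:oracle} (together with trivially checking the box constraints and the equality~(\ref{eq:dontChangePStar})) is a strong separation oracle: given a candidate $\hat{\Delta}$, it computes $w^\prime$ via~(\ref{eq:updateWeight}), finds the shortest (or second-shortest) $s$--$t$ path $p$ in $O(NM)$ time by Yen's algorithm, and either certifies that $(\wVec+\hat\Delta)^\top\xVec_p \ge \ell+\delta$ for the shortest such $p$ — hence for all $p\in P_{\ell+\delta}\setminus\{p^*\}$, so $\hat\Delta$ is feasible — or returns the vector $\xVec_p$, which yields a separating hyperplane $(\wVec+\Delta)^\top\xVec_p \ge \ell+\delta$ violated at $\hat\Delta$. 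Third, I would note that every returned constraint has polynomially bounded encoding length: $\xVec_p\in\{0,1\}^M$, and the right-hand side $\ell+\delta$ has bit-length polynomial in the input. Then the equivalence of separation and optimization (the Gr\"otschel--Lov\'asz--Schrijver / Khachiyan machinery, cited here via~\cite{Gacs1981,Grotschel1981}) gives an algorithm that, in a number of ellipsoid iterations polynomial in $M$ and the input bit-length, returns a point within additive error $\epsilon$ of the optimal budget, for any $\epsilon>0$ specified in unary (or, equivalently, in time polynomial in $\log(1/\epsilon)$).

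The main obstacle is a technical subtlety rather than a conceptual one: the ellipsoid method's guarantees require the feasible polyhedron to be bounded and to satisfy a non-emptiness/full-dimensionality condition, neither of which is automatic here. Boundedness I handle with the box argument above. Non-emptiness is guaranteed because $\Delta$ with a sufficiently large uniform increase on all off-$p^*$ edges is always feasible, giving an explicit Slater-type point after a small relaxation of~(\ref{eq:tooShort}); and the equality constraint~(\ref{eq:dontChangePStar}) merely restricts optimization to the affine subspace $\{\Delta:\xVec_{p^*}^\top\Delta=0\}$, within which we rerun the argument. I would also remark that Yen's algorithm returns simple paths, which is exactly what we want since $P_{\ell+\delta}$ as used in the LP can be taken to be simple paths (a non-simple walk is never shortest when weights are nonnegative, and when $\delta=0$ with strictly positive weights the faster Eppstein bound applies), so the oracle is sound. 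Assembling these pieces, the proposition follows.
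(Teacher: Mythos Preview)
Your proposal is correct and follows essentially the same approach as the paper: both invoke the ellipsoid method with a polynomial-time separation oracle (Algorithm~\ref{alg:oracle}) over the LP~(\ref{eq:minCost})--(\ref{eq:dontChangePStar}) and cite the Gr\"otschel--Lov\'asz--Schrijver / Khachiyan equivalence of separation and optimization. Your write-up is in fact considerably more careful than the paper's, which simply states that the oracle runs in polynomial time and defers to~\cite{Gacs1981,Grotschel1981} without addressing boundedness, full-dimensionality, or encoding length.
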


\section{Proposed Method: PATHPERTURB}
\label{sec:pathattack}
While the ellipsoid algorithm provably converges in polynomial time, it is considerably slower in practice than simplex methods. Thus, our proposed algorithm iteratively solves a linear optimization procedure, adding constraints via the oracle as necessary. We call this algorithm \texttt{PATHPERTURB}.

Our perturbation algorithm uses a linear program where each constraint is associated with a path from $s$ to $t$ that is not longer than $p^*$. Each path must have weights added to the edges so that the path's length will become sufficiently long. \texttt{PATHPERTURB} operates in an iterative fashion as it builds the set of necessary constraints. At each iteration, it finds a solution based on a subset of constraints, perturbs the weights based on this solution, and, if there is still a path from $s$ to $t$ that is shorter, it adds the corresponding constraint to the linear program. Algorithm~\ref{alg:PATHPERTURB} provides \texttt{PATHPERTURB}'s pseudocode.
\begin{algorithm}[tb]
\caption{PATHPERTURB}
\label{alg:PATHPERTURB}
\textbf{Input}: graph $G=(V, E)$, weights $w$, target path $p^*$, buffer $\delta$\\
\textbf{Output}: perturbation vector $\hat{\Delta}$
\begin{algorithmic}[1]
\STATE $\ell\gets$ length of $p^*$
\STATE $\wVec\gets$ weight vector for $w$
\STATE $P_{\ell+\delta}\gets\emptyset$
\STATE $M\gets|E|$
\STATE $p\gets p^*$
\REPEAT
\STATE $P_{\ell+\delta}\gets P_{\ell+\delta}\cup \{p\}$
\STATE $\hat{\Delta}\gets$ solution to (\ref{eq:minCost})--(\ref{eq:dontChangePStar})
\STATE $w^\prime\gets w+\hat{\Delta}$ ~ ~ $\langle\langle$as in (\ref{eq:updateWeight})$\rangle\rangle$
\STATE $p\gets$ ConstraintOracle($G$, $w^\prime$, $p^*$, $\delta$)
\UNTIL{$p$ is empty}
\RETURN $\hat{\Delta}$
\end{algorithmic}
\end{algorithm}

\section{Experiments}
\label{sec:setup}

This section presents the baseline methods, the networks used in experiments, the experimental setup, and the results.

\subsection{Baseline Methods}
We compare \texttt{PATHPERTURB} to two simple greedy baseline algorithms. Each algorithm iteratively perturbs a single edge on the shortest path $p$ from $s$ to $t$ until $p^*$ is the shortest path (and, if the buffer $\delta$ is greater than 0, until the second shortest path is at least $\delta$ longer than $p^*$). The first baseline we consider, \texttt{GreedyFirst}, perturbs the first edge (in path traversal order) in $p$ that deviates from $p^*$.  We also use a method in which we perturb the edge with the smallest weight of all edges that are in $p$ but not $p^*$. We refer to this baseline as \texttt{GreedyMin}. In both cases, the selected edge is perturbed enough to make the path at least $\delta$ longer than $p^*$---i.e., if $\xVec_p$ is the edge indicator vector for the current shortest path and $\ell^\prime=(\wVec+\Delta)^\top \xVec_p$, the entry in $\Delta$ associated with the selected edge is increased by $\ell+\delta-\ell^\prime$.

\subsection{Synthetic and Real Networks}
\label{sec:graphdata}
We ran \texttt{PATHPERTURB} and the baseline algorithms on both synthetic and real networks. All networks are undirected. 

Our synthetic networks span a wide variety of topologies:
\begin{itemize}
    \item Erd\H{o}s--R\'{e}nyi (ER) random networks with 16,000 nodes and an edge probability of 0.00125
    \item Barab\'{a}si--Albert (BA) graphs with 16,000 nodes, where each new node connects to 10 existing nodes
    \item Watts--Strogatz (WS) graphs with 16,000 nodes, average degree $20$, and edge rewiring probability $0.02$
    \item Stochastic Kronecker (KR) graphs with $2^{14}$ nodes and a density parameter of $0.0125$
    \item $285\times285$ two-dimensional lattice (LAT) networks
    \item $565$-vertex complete (COMP) graphs
\end{itemize}
In all cases, we generate 100 networks from random (or fixed) network generators and add edge weights. Note that the number of edges is approximately 160,000 in all synthetic networks. We also consider various edge-weight distribution for each synthetic network:
\begin{itemize}
    \item Option 1: Give all edges weight $1$.
    \item Option 2: For each edge, draw a value from a Poisson distribution with rate parameter 20, and add $1$ to get the weight.
    \item Option 3: Draw each weight from a uniform distribution over integers from 1 to 41.
\end{itemize}

In addition to synthetic graphs, we use the following real networks:
\begin{itemize}
    \item Oregon autonomous systems (AS)~\cite{Leskovec2005}
    \item Wikispeedia (WIKI)~\cite{West2009}
    \item Pennsylvania roads (PA-ROAD)~\cite{Leskovec2009}
    \item Northeast US roads (NEUS)\footnote{Available at \url{https://bit.ly/2QWcug9}.}
    \item Central Chilean power grid (GRID)~\cite{Kim2018}
    \item Lawrence Berkeley National Laboratory computer network traffic (LBL)\footnote{Available at \url{https://bit.ly/2PQbOsr}.}
    \item DBLP coauthorship graph (DBLP)~\cite{Benson2018}
\end{itemize}
Networks AS, WIKI, and PA-ROAD do not have weights on their edges, so we add weights similar to the synthetic networks. In LBL and DBLP, the weights represent similarities rather than distances---number of connections and number of coauthored papers, respectively---so we invert the weight for use in the shortest path computation. In these cases, we set $\delta$ to 0.1, while we use $\delta=1$ in all other cases.

\subsection{Experimental Setup}
We run 100 trials for each network (or network generator) and each weighting scheme if applicable. In each experiment, we select $s$ and $t$ uniformly at random from the largest connected component of $G$, with the exception of LAT, PA-ROAD, and NEUS. In these grid-like graphs, computing the sequence of shortest simple paths is extremely time consuming, and we instead select $s$ at random and choose $t$ from among the nodes $50$ hops away from $s$. We then compute the 100th, 200th, 400th, and 800th shortest paths from $s$ to $t$ and use these as $p^*$. For LAT, PA-ROAD, and NEUS, we compute these paths only using the induced subgraph of nodes that are at most 60 hops away from $s$.

We ran the experiments using a Linux cluster with 32 cores and 192 GB of memory per node. We implemented the linear program in \texttt{PATHPERTURB} using the Python interface to Gurobi 9.1.1, and sequential shortest paths were computed using  \texttt{shortest\_simple\_paths} in NetworkX.\footnote{Gurobi is available at \url{https://www.gurobi.com}. NetworkX is available at \url{https://networkx.org}.}

\subsection{Results}
\label{sec:results}
\begin{figure*}
    \centering
    \includegraphics[width=\textwidth]{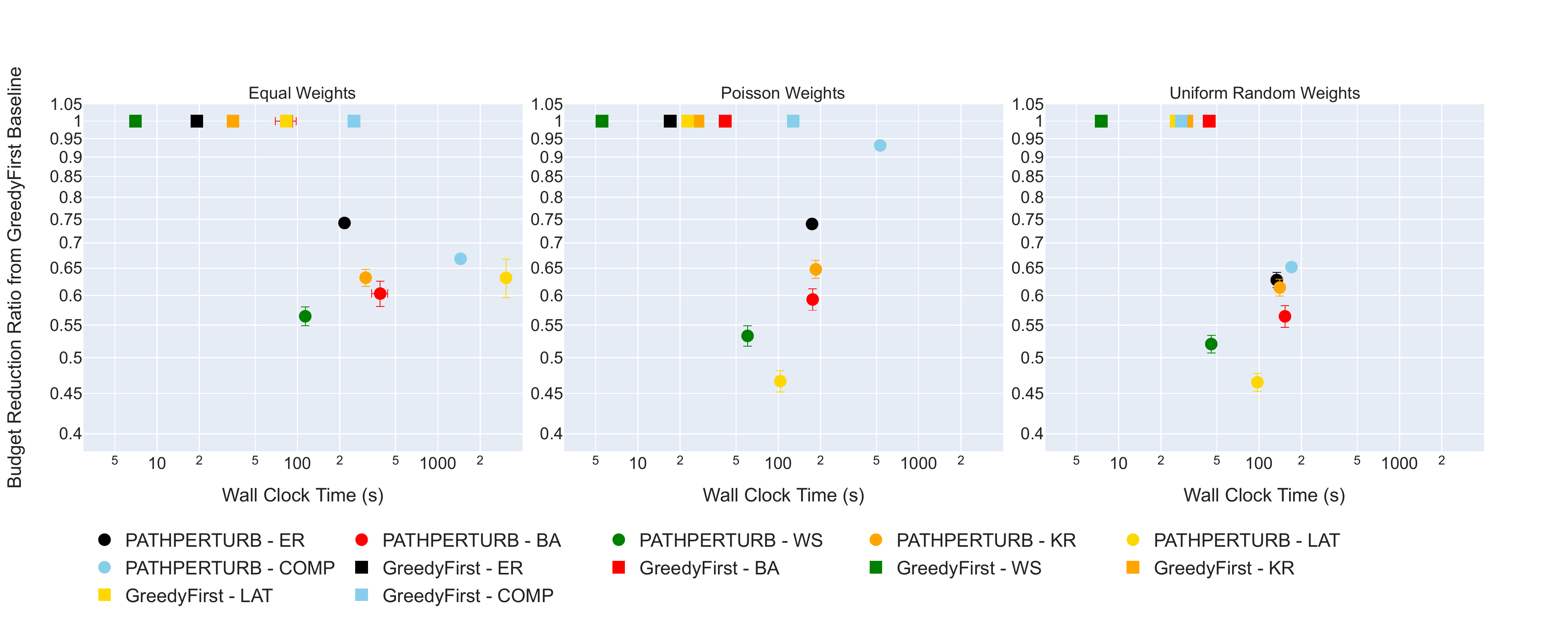}
    \caption{Results on synthetic networks. Results are shown using \texttt{PATHPERTURB} ($\circ$) and \texttt{GreedyFirst} ($\square$), and each color represents a different network. The plots present results when the weights are equal (left), when they are drawn from a Poisson distribution (center), and drawn from a uniform distribution (right). Each plot shows the required budget as a proportion of the budget required using \texttt{GreedyFirst} (vertical axis) with respect to wall clock running time (horizontal axis). Lower cost reduction ratio and lower wall clock time (toward the lower left) is better.  Error bars represent standard errors. In nearly all cases, \texttt{PATHPERTURB} yields a substantial cost reduction for its additional running time, though cliques with Poisson weights are nearly optimized with the baseline. Note: the black square (\texttt{GreedyFirst} on ER) in the right-hand plot is obscured by the yellow square.}
    \label{fig:plots_sim_gs}
\end{figure*}

We treat the result of \texttt{GreedyFirst} as our baseline budget and report the value optimized by \texttt{PATHPERTURB} as a reduction from the baseline. With few exceptions, \texttt{GreedyFirst} outperforms \texttt{GreedyMin} in both running time and perturbation cost, so we omit the \texttt{GreedyMin} results for clarity of presentation. For each graph, we use the algorithms in an attempt to minimize the budget, after which the adversary would determine whether or not the attack is possible within the constraints. Figure~\ref{fig:plots_sim_gs} shows the results on the synthetic networks, while Figure~\ref{fig:plots_ruw_gs} shows the results on real networks with synthetic edge weights; and
Figure~\ref{fig:plots_rw_gs} shows the results on real weighted networks. The figures show the results where $p^*$ is the 800th shortest path. Due to space limitations, we omit the other results; they are substantially similar.

\begin{figure*}
    \centering
    \includegraphics[width=\textwidth]{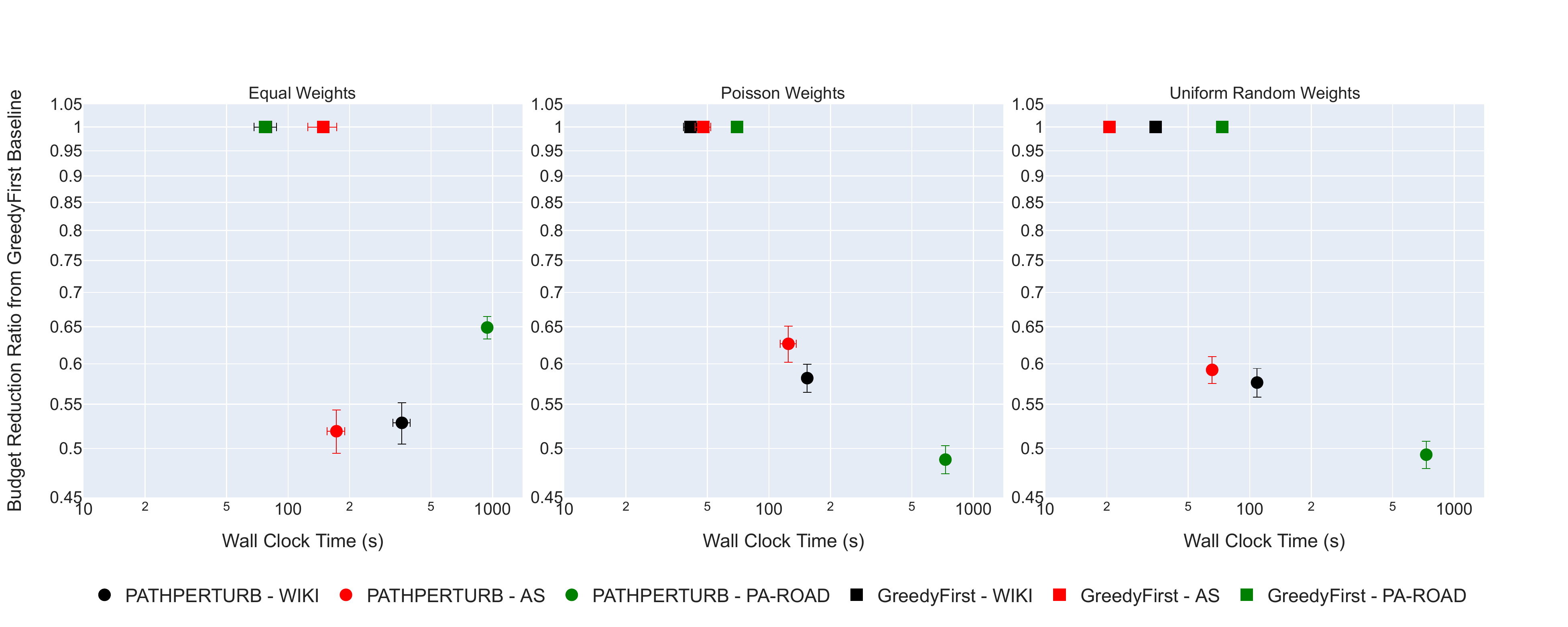}
    \caption{Results on unweighted real networks.
    Results are shown using \texttt{PATHPERTURB} ($\circ$) and \texttt{GreedyFirst} ($\square$), and each color represents a different network. The plots present results when the weights are equal (left), when they are drawn from a Poisson distribution (center), and drawn from a uniform distribution (right). Each plot shows the required budget as a proportion of the budget required using \texttt{GreedyFirst} (vertical axis) with respect to wall clock running time (horizontal axis). Lower cost reduction ratio and lower wall clock time (toward the lower left) is better. 
    Error bars represent standard errors. As with the synthetic networks, \texttt{PATHPERTURB} provides a significant cost reduction in all networks, though in this case we see a greater increase in running time for PA-ROAD. Note: the black square (\texttt{GreedyFirst} on WIKI) in the left-hand plot is obscured by the green square.}
    \label{fig:plots_ruw_gs}
\end{figure*}

\begin{figure*}
    \centering
    \includegraphics[width=0.6\textwidth]{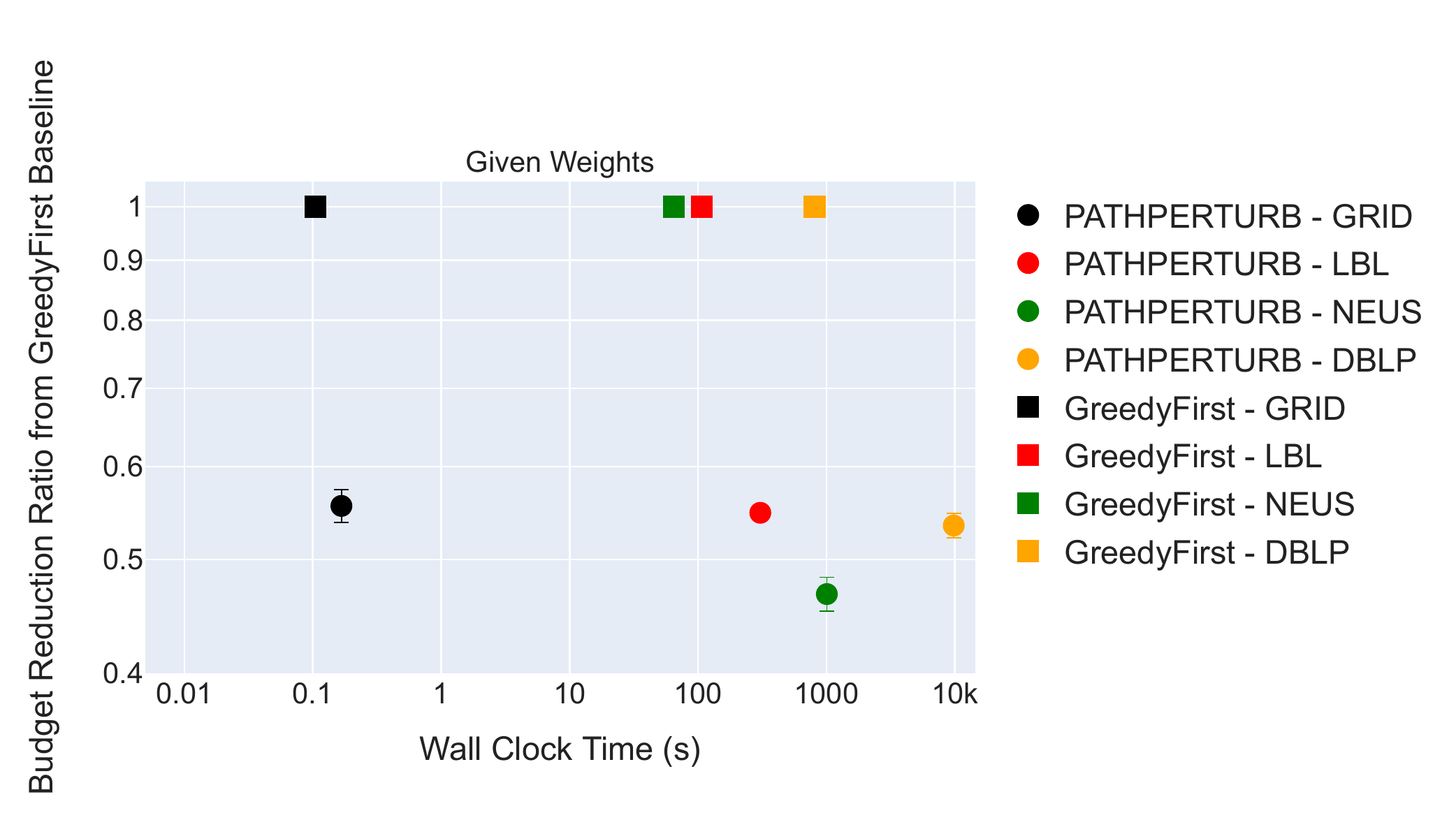}
    \caption{Results on weighted real networks.
    Results are shown using \texttt{PATHPERTURB} ($\circ$) and \texttt{GreedyFirst} ($\square$), and each color represents a different network. The plot shows the required budget as a proportion of the budget required using \texttt{GreedyFirst} (vertical axis) with respect to wall clock running time (horizontal axis). Lower cost reduction ratio and lower wall clock time (toward the lower left) is better. 
    Error bars represent standard errors. In all cases, \texttt{PATHPERTURB} reduces the cost of attacking the graph by about a factor of two.}
    \label{fig:plots_rw_gs}
\end{figure*}

Across all experiments, we see a substantial improvement over \texttt{GreedyFirst} by using \texttt{PATHPERTURB}, for the most part ranging from a 25\% reduction in the required perturbation budget to a decrease of more than a factor of two. This comes at the expense of increased running time: an increase of an order of magnitude appears typical. In the synthetic networks, \texttt{PATHPERTURB} provides a greater improvement for heterogeneous Kronecker (KRON) and BA graphs rather than Erd\H{o}s--R\'{e}nyi graphs. This could be an effect of hubs: nodes with high degree that tend to facilitate short paths may make it more difficult to obtain a low budget via a greedy procedure.

The main exception to the substantial budget improvement is cliques (COMP, blue in Figure~\ref{fig:plots_sim_gs}) with Poisson weights. To understand this result, consider an unweighted clique, and note that the 800th shortest path is a 3-hop path. The optimal perturbation to make a particular 3-hop path the shortest is to perturb all edges adjacent to $s$ and $t$ except those on $p^*$. With weights drawn from a Poisson distribution, we get weights typically near the mean, which gives us a similar effect to the unweighted graph: 2-hop paths are all similar lengths, as are 3-hop paths. In this context, \texttt{GreedyFirst} identifies a near-optimal perturbation. Looking deeper into the data, we see that for path ranks of 100, 200, and 400, the required budgets using \texttt{GreedyFirst} and \texttt{PATHPERTURB} are exactly the same when all edge weights are equal.

We observe one major difference from the results with \texttt{PATHATTACK}~\cite{Miller2021}, where the adversary's goal is the same but his/her attack vector is to cut edges. We see more substantial gains over the greedy baseline in grid-like networks. In the lattice network and the road networks, \texttt{PATHATTACK} took substantially more time for very modest improvements in edge removal cost. Here, we see a relatively high computational burden in these grid-like networks---reliably an order of magnitude in the real datasets---but the budget reductions are among the best. In addition, we note that lattices with Poisson weights are one of the few cases where \texttt{GreedyMin} outperforms \texttt{GreedyFirst} (the other being cliques with Poisson weights), though the difference is small and does not explain the extent of the difference. This may be due to the difference in cost.  In~\cite{Miller2021}, costs were proportional to the weights of removed edges, while in the present work the cost of perturbing a path will be smaller if the edge weights are larger. Thoroughly investigating this phenomenon is a subject for future work.

\section{Related Work}
\label{sec:related}
This paper expands the work on adversarial graph analysis that was introduced recently. Examples include attacks against vertex classification~\cite{Zugner2018,Zugner2019b} and node embedding~\cite{Bojchevski2019}, as well as community detection when an adversary does not want to be grouped with other individuals~\cite{Kegelmeyer2018}. In \cite{Miller2021}, an adversary cut edges to attack shortest path algorithms; here, an adversary adds edge weights. 

Prior network science work on attacks against graphs was focused on attempts to disrupt infrastructure, e.g., disconnecting a power grid graph. In this area, it was shown that graphs with heterogeneous degree distributions (like BA and KR graphs) are much more robust to random node removals, but highly susceptible to targeted attacks against the nodes with the most connections~\cite{Albert2000}. 

There has been previous work on altering shortest paths, though it has been primarily focused on removal of a single edge or node. The objective of the ``most vital edge'' (or most vital node) problem is, given two nodes in a graph, to find the edge (node) whose removal most increases the shortest path between the source and destination~\cite{Nardelli2001,Nardelli2003}. In an adversarial context, this would be an instance of an adversary intending to divert the user from the best solution, rather than being motivated to push traffic along a particular path of interest.


In that sense, the most vital edge and node problem is similar to recent work on Stackelberg planning~\cite{Speicher2018}. In this work, like in the most valuable node and edge problems, the goal of the attacker is to make it as costly as possible for the user to perform the task. While the most valuable edge only allows one move, the Stackelberg planning work uses a turn-based leader-follower framework, where the leader makes the follower's actions more costly at each step. A similar problem is the adversarial shortest path problem, where the state space has uncertainty that an adversary could exploit to decrease the user's reward as states are traversed~\cite{Neu2012}.
%

There are two complementary areas where path finding in an adversarial context is crucial. One is network interdiction, in which an adversary is attempting to traverse a network undetected~\cite{Washburn1995}. The other involves planning paths through hostile territory; for example, an unmanned aerial vehicle in enemy air space~\cite{Jun2003}. Recent path interdiction work has focused on attack disruption~\cite{Letchford2013}. Work in this area also uses oracles to judiciously select from an extremely large set of potential strategies~\cite{Jain2011}.

\section{Conclusions}
\label{sec:conclusion}

We defined the Force Path Problem, in which an adversary adds weights to edges in order to make a particular path the shortest between a pair of source and destination nodes. The adversary has a budget, which he/she cannot exceed. We showed that Force Path can be optimized to within an arbitrarily small error in polynomial time. We demonstrated that Force Path can be formulated as a linear program with an intractable number of constraints. However, standard shortest-path algorithms can be used to obtain a polynomial-time constraint oracle, which allowed us to use constraint generation. We formalized this procedure in our \texttt{PATHPERTURB} algorithm, which we applied to a diverse collection of real and simulated data. We observed that the perturbation budget optimized using \texttt{PATHPERTURB} is often as little as half of what can be obtained using a greedy baseline perturbation procedure.

\section*{Acknowledgments}
This material is based upon work supported by the United States Air Force under Air Force Contract No. FA8702-15-D-0001 and the Combat Capabilities Development Command Army Research Laboratory (under Cooperative Agreement Number W911NF-13-2-0045). Any opinions, findings, conclusions or recommendations expressed in this material are those of the authors and do not necessarily reflect the views of the United States Air Force or Army Research Laboratory.
%
%
%
%
\bibliographystyle{named}
\bibliography{bibfile.bib}


\end{document}